

\documentclass{article}


\usepackage[utf8]{inputenc}
\usepackage{a4wide}
\usepackage{amsmath}
\usepackage{amsfonts}
\usepackage{amssymb}
\usepackage{mathtools}
\usepackage{xcolor}
\usepackage{enumerate}
\usepackage{natbib}
\usepackage{xspace}
\usepackage{subcaption}
\usepackage{xparse}
\usepackage[thmmarks, amsmath, thref, hyperref]{ntheorem}


\usepackage[pdfauthor={Sergio Waitman},%
            pdftitle={Incremental stability of Lur'e systems through piecewise-affine approximations},%
            pdftex,bookmarks]{hyperref}


\definecolor{dred}{rgb}{0.5,0,0}
\definecolor{red}{rgb}{0.8,0,0}
\definecolor{blue}{rgb}{0,0,0.5}
\definecolor{green}{rgb}{0,0.3,0}
\definecolor{grey}{rgb}{0.5,0.5,0.5}
		            

\usepackage{redef}
\newcommand{\paper}{paper\xspace}

\newcommand{\vpwa}{\varphi_\textnormal{PWA}}
\newcommand{\etaref}{\eta_\textnormal{ref}}

\def\smath#1{\text{\scalebox{1.1}{$#1$}}}

\newcommand{\qedal}{\tag*{$\Box$}}

\newtheorem{theorem}{Theorem}[section]
\newtheorem{definition}[theorem]{Definition}

\newtheorem{proposition}[theorem]{Proposition}
\newtheorem{assumption}[theorem]{Assumption}
\newtheorem{algorithm}[theorem]{Algorithm}
\theorembodyfont{\upshape}
\newtheorem{example}{Example}

\theoremstyle{nonumberplain}
\theoremsymbol{\ensuremath{\blacksquare}}
\theorembodyfont{\upshape}
\newtheorem{proof}{Proof}

\captionsetup{justification=centering}


\begin{document}

\title{Incremental stability of Lur'e systems through piecewise-affine approximations}

\author{S.~Waitman,
L.~Bako,
P.~Massioni,
G.~Scorletti
and V.~Fromion}

\maketitle

\begin{abstract}
	Lur'e-type nonlinear systems are virtually ubiquitous in applied control theory, which explains the great interest they have attracted throughout the years. The purpose of this \paper is to propose conditions to assess incremental asymptotic stability of Lur'e systems that are less conservative than those obtained with the incremental circle criterion. The method is based on the approximation of the nonlinearity by a piecewise-affine function. The Lur'e system can then be rewritten as a so-called piecewise-affine Lur'e system, for which sufficient conditions for asymptotic incremental stability are provided. These conditions are expressed as linear matrix inequalities (LMIs) allowing the construction of a continuous piecewise-quadratic incremental Lyapunov function, which can be efficiently solved numerically. The results are illustrated with numerical examples.
\end{abstract}

\paragraph{Keywords:}
 	incremental stability, {L}ur'e systems, incremental circle criterion, piecewise-affine systems, piecewise-affine approximation, Lyapunov methods.


\section{Introduction}

The so-called Lur'e-type nonlinear systems, given by the feedback interconnection of a linear time-invariant (LTI) system and a memoryless nonlinearity $\varphi$, represent an important class of systems with practical application in virtually any domain of system theory. The study of these systems is closely connected with the development of the absolute stability problem (see e.g.~\cite{Liberzon2006}), which consists in establishing conditions to ensure asymptotic stability of the origin for a set of nonlinear functions in a sector.

In this \paper, we are interested in assessing incremental stability of Lur'e systems, i.e. the stability of every system trajectory with respect to each other. Several different notions of incremental stability coexist~\citep[see e.g.][]{Fromion1997,Lohmiller1998,Angeli2002,Pavlov2004}, but all have in common the fact that they ensure strong qualitative properties on the system behavior, such as asymptotic independence of initial conditions and the unicity of the steady state. For this reason, incremental stability is often used to cope with problems involving tracking/synchronization and anti-windup control~\citep[see e.g.][]{Rantzer2000a,Kim2015}.

In the framework of input-output stability, \cite{Zames1966} proposed graphical conditions to ensure (incremental) stability of Lur'e systems, known as the (incremental) circle criterion. These conditions are established for nonlinearities belonging to a sector and, in this sense, the nonlinearity can be seen as a bounded perturbation on the linear dynamics of the system. The description via sector bounds yields stability results that tend to be quite conservative, as the sector bound gives a very crude representation of the nonlinear operator. For stability analysis, an attempt to reduce the conservatism was made by transforming the feedback loop via the addition of so-called \emph{Popov-Zames-Falb frequency-dependent multipliers}~\citep{Zames1966,Zames1968a}. However, it turns out that this approach is not applicable when incremental stability is considered~\citep{Kulkarni2002}. \cite{Fromion2004a} showed that there exist Lur'e nonlinear systems for which multiplier-based analysis ensures finite gain stability, but which are not incrementally stable. On the other hand, necessary and sufficient conditions for incremental stability of Lur'e systems were proposed by~\cite{Fromion2003a}, but with the drawback of being NP-hard. There is then a need for an alternative approach to the assessment of incremental stability of Lur'e systems, which is less conservative than the celebrated incremental circle criterion while being efficiently solvable. For this reason, we consider the analysis via piecewise-affine approximations.

Piecewise-affine (PWA) systems are nonlinear systems described by piecewise-affine differential equations. They can be used to naturally describe systems containing piecewise-affine nonlinearities (such as saturations, relays and dead zones), or as an approximation of more general nonlinear systems. The interest in this class of systems lies in the fact that their description is quite close to that of LTI systems, allowing transposition of classic results on stability and performance analysis while being able to present quite complex nonlinear dynamics. \cite{Johansson1998} introduced piecewise-quadratic Lyapunov functions to the analysis of PWA systems through the use of the \Sproc. The approach was extended to consider the analysis of incremental properties of PWA systems by~\cite{Waitman2016}.

In this \paper, we propose a method to assess incremental stability of Lur'e systems through piecewise-affine approximations. The nonlinearity $\varphi$ is replaced by a piecewise-affine function $\vpwa$ plus an approximation error $\epsilon$, which is characterized by its Lipschitz constant. This allows us to rewrite the Lur'e system as the interconnection of a PWA system with the approximation error, in what we may call a PWA Lur'e system (see Fig.~\ref{fig:AppBlckDgm}). Through the refinement of $\vpwa$, we are able to control the approximation error, and hence expect to obtain less conservative results. In this sense, we address two technical questions: obtain sufficient conditions to assess incremental stability of PWA Lur'e systems; and propose a method allowing the construction of $\vpwa$. Although techniques to construct piecewise-affine approximations exist in the literature~\citep[see e.g.][]{Zavieh2013,Azuma2010}, we introduce an approximation method ensuring a given upper bound on the Lipschitz constant of the approximation error.
\begin{figure}
	\centering
		\includegraphics[width=0.5\linewidth]{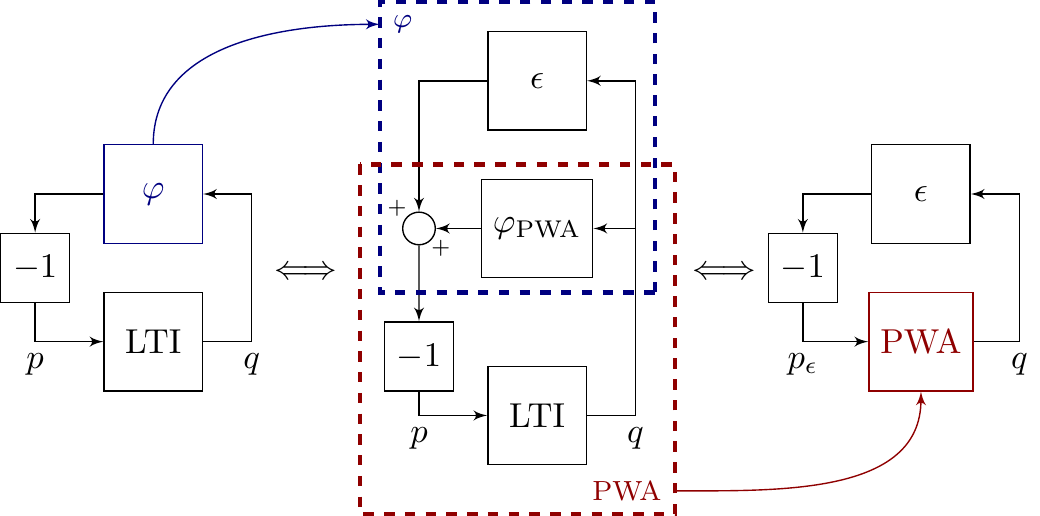}
	\caption{Block diagram illustrating the approach taken in this \paper.}
	\label{fig:AppBlckDgm}
\end{figure}

The \paper is organized as follows. Section~\ref{se:ProbForm} states the problem of ensuring incremental asymptotic stability of Lur'e systems. The proposed approach is presented in Section~\ref{se:PropApp}. In Section~\ref{se:IncStabPWALure}, sufficient conditions for incremental asymptotic stability of PWA Lur'e systems are presented. Section~\ref{se:AppNL} proposes a method to construct $\vpwa$ that ensures an upper bound on the Lipschitz constant of the approximation. Finally, Section~\ref{se:NumEx} contains numerical examples illustrating the results obtained with the proposed approach.

\subsection*{Notation} We denote by $\norm{\cdot}$ the Euclidean norm. The real half line $[0,+\infty)$ is denoted by $\R_+$. The interior of a set $\mathcal{A}$ is denoted $\inte{\mathcal{A}}$. For a vector $v = [v_1,\ldots,v_n] \in \R^n$, $v \succ 0$ (resp. $v \succeq 0$) is equivalent to the componentwise inequality $v_i > 0$ (resp. $v_i \geq 0$), $\forall	i \in \{1,\ldots,n\}$. For a matrix $A \in \R^{n\times n}$, $A \succ 0$ (resp. $A \succeq 0$) denotes that $A$ is positive definite (resp. semi-definite). The symbol $\bullet$ replaces the corresponding symmetric block in a symmetric matrix. The column concatenation of two matrices $A$ and $B$ of compatible dimensions, denoted by $\col$, is such that $\col(A,B) = {\text{\footnotesize$\nmatrix{c}{A\\B}$}}$.

The function $\phi: \R_+ \times \R_+ \times X \rightarrow X$ is called the \emph{state transition map} and is such that $x = \phi(t,t_0,x_0)$ is the state $x \in X$ attained at instant $t$ when the system evolves from $\xo \in X$ at the instant $t_0$.

A function $\rho: \R_+ \rightarrow \R_+$ is said to be positive definite if it is such that $\rho(0) = 0$ and $\rho(r) >0$, $\forall r \neq 0$. We denote by $\K$ the class of continuous and strictly increasing functions $\alpha: \R_+ \rightarrow \R_+$ for which $\alpha(0) = 0$. A function $\alpha$ is of class $\Koo$ if it is of class $\K$ and unbounded. A continuous function $\beta: \R_+\times\R_+ \rightarrow \R_+$ is of class $\KL$ if for any fixed $t \geq 0$, $\beta(\cdot,t) \in \K$ and, for fixed s, $\beta(s,\cdot)$ is decreasing with $\lim_{t\rightarrow\infty} \beta(s,t) = 0$.

\section{Problem formulation}
\label{se:ProbForm}

In this \paper, we are interested in establishing conditions to assess the incremental asymptotic stability of nonlinear Lur'e systems given by
\begin{equation}
\label{eq:Lure}
	\left\{ 	
 	\begin{aligned}
 		\dxt &= A\xt + B\pt \\
 		\qt &= C\xt \\
 		\pt &= -\varphi(\qt) \\
 		x(0) &= x_0 		
 	\end{aligned}
 	\right.
\end{equation}
where $x(t)\in X \subseteq \R^n$ is the state, $\pt, \qt \in \R$ are internal signals and $\varphi$ is a given memoryless Lipschitz nonlinearity with $\varphi(0) = 0$. Let us recall the following definition, adapted from~\cite{Angeli2002}.

\begin{definition}
\label{def:dAS}
	We say that system~\eqref{eq:Lure} is incrementally asymptotically stable if there exists a function $\beta$ of class $\KL$ so that for all $x_0, \tx_0 \in X$ and all $t \geq 0$ the following holds
	\begin{equation}
	\label{eq:dGAS}
		\norm{x(t) - \tx(t)} \leq \beta(\norm{x_0 - \tx_0}, t)
	\end{equation}
	with $\xt = \phi(t,0,x_0)$ and $\txt = \phi(t,0,\tx_0)$. If $X = \R^n$, the system is said to be incrementally globally asymptotically stable.
\end{definition}

Parallel to standard stability conditions, incremental asymptotic stability may be shown to be equivalent to a Lyapunov-like condition. In view of the adapted definition adopted in this \paper, let us recall the following theorem, adapted from~\cite{Angeli2002}.

\begin{theorem}
\label{th:dAS}
	System~\eqref{eq:Lure} is incrementally asymptotically stable as in Definition~\ref{def:dAS} if there exist a continuous function $V: X \times X \rightarrow \R_+$, called an incremental Lyapunov function, and $\Koo$ functions $\alpha_1$ and $\alpha_2$ such that
	\begin{equation}
	\label{eq:dASnorm}
		\alpha_1\big(\!\norm{x - \tilde{x}}\!\big) \leq V(x,\tilde{x}) \leq \alpha_2\big(\!\norm{x - \tilde{x}}\!\big)
	\end{equation}
	for every $x,\tx \in X$, and along any two trajectories $x,\tx$, starting respectively from $x_0,\tx_0 \in X$, $V$ satisfies for any $t \geq 0$
	\begin{equation}
	\label{eq:dASnegdef}
	 	V(\xt,\txt) - V(x_0,\tx_0) \leq -\int_0^t\!\rho\big(\!\norm{x(\tau) - \tx(\tau)}\!\big) \, d\tau
	\end{equation}
	with $\xt = \phi(t,0,\xo)$, $\txt = \phi(t,0,\txo)$ and $\rho$ a positive definite function.
\end{theorem}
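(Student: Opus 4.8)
The plan is to reduce the dissipation inequality~\eqref{eq:dASnegdef} to a scalar comparison inequality for $V$ evaluated along the pair of trajectories, and then to invoke the standard comparison / $\KL$-estimate machinery, exactly as in the Lyapunov proof of ordinary asymptotic stability. First I would fix $x_0,\tx_0\in X$, write $\xt=\phi(t,0,\xo)$, $\txt=\phi(t,0,\txo)$, and set $W(t):=V(\xt,\txt)$, a continuous nonnegative function of $t$. Applying~\eqref{eq:dASnegdef} to the pair of trajectories issued from $(x(s),\tx(s))$ at time $s$ --- legitimate since~\eqref{eq:Lure} is time invariant, so these are again trajectories of the system --- yields $W(t)-W(s)\le-\int_s^t\rho(\norm{x(\tau)-\tx(\tau)})\,d\tau\le 0$ for all $0\le s\le t$; hence $W$ is non-increasing, so it is differentiable at almost every $t$ with $\dot W(t)\le-\rho(\norm{x(t)-\tx(t)})$.

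Next I would eliminate the state from the right-hand side using~\eqref{eq:dASnorm}. That inequality gives, for every $t$, $\alpha_1(\norm{x(t)-\tx(t)})\le W(t)\le\alpha_2(\norm{x(t)-\tx(t)})$, so $\norm{x(t)-\tx(t)}$ lies in the set $K(W(t)):=\{s\ge 0:\alpha_1(s)\le W(t)\le\alpha_2(s)\}$, which for $W(t)>0$ is bounded away from $0$ and depends on the two trajectories only through the scalar $W(t)$. Hence $\rho(\norm{x(t)-\tx(t)})\ge\hat\rho(W(t))$, where $\hat\rho(w):=\inf\{\rho(s):s\in K(w)\}$ is again a positive definite function (continuous if $\rho$ is, which may be assumed), independent of $x_0,\tx_0$. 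We thus obtain the autonomous comparison inequality $\dot W(t)\le-\hat\rho(W(t))$ at almost every $t$, together with $W(0)\le\alpha_2(\norm{x_0-\tx_0})$.

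Finally I would apply the comparison lemma: replacing $\hat\rho$, if needed, by a locally Lipschitz positive definite minorant so the comparison equation is well posed, one gets $W(t)\le y(t)$ where $\dot y=-\hat\rho(y)$, $y(0)=W(0)$; and for such scalar equations there is a class-$\KL$ function $\beta_0$, independent of the initial value, with $y(t)\le\beta_0(y(0),t)$. Combining this with the lower bound in~\eqref{eq:dASnorm},
\begin{equation*}
\norm{x(t)-\tx(t)}\le\alpha_1^{-1}(W(t))\le\alpha_1^{-1}\big(\beta_0(W(0),t)\big)\le\alpha_1^{-1}\big(\beta_0(\alpha_2(\norm{x_0-\tx_0}),t)\big)=:\beta(\norm{x_0-\tx_0},t),
\end{equation*}
and $\beta\in\KL$ since $\alpha_1^{-1},\alpha_2\in\Koo$ and $\beta_0\in\KL$. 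This is exactly~\eqref{eq:dGAS}, so~\eqref{eq:Lure} is incrementally asymptotically stable.

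I expect the delicate point to be the passage from the bare hypothesis ``$\rho$ positive definite'' to a genuine comparison estimate: one must check that the minorant $\hat\rho$ is still positive definite --- which hinges on the two-sided bound~\eqref{eq:dASnorm} keeping the argument of $\rho$ away from zero --- and, more importantly, that the resulting $\KL$ function $\beta_0$ does \emph{not} depend on $\norm{x_0-\tx_0}$, since otherwise one obtains only a pointwise, non-uniform decay. Both are secured by the observation above that $K(W(t))$ is determined by the value of $W(t)$ alone. As an alternative to the comparison ODE, one may argue directly: $W$ non-increasing together with $\int_0^\infty\rho(\norm{x(\tau)-\tx(\tau)})\,d\tau\le W(0)<\infty$ forces $\norm{x(t)-\tx(t)}\to 0$ and, quantitatively, a uniform bound on the time spent outside any fixed ball, i.e.\ uniform attractivity; this together with the uniform bound $\norm{x(t)-\tx(t)}\le\alpha_1^{-1}(\alpha_2(\norm{x_0-\tx_0}))$ gives the $\KL$ estimate by the classical equivalence between uniform asymptotic stability and $\KL$ bounds.
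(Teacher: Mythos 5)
The paper itself gives no proof of this statement: Theorem~\ref{th:dAS} is recalled as an adaptation of a known result of Angeli (2002), so there is nothing internal to compare your argument against. On its own terms, your proof is the standard and correct one: reduce to the scalar function $W(t)=V(\xt,\txt)$, note that the integral hypothesis directly yields $W(t)-W(s)\leq -\int_s^t\rho(\norm{x(\tau)-\tx(\tau)})\,d\tau$ for $s\leq t$ (you do not even need the time-invariance/restart argument --- just subtract the inequality at times $s$ and $t$), use the two-sided bound~\eqref{eq:dASnorm} to minorize the integrand by $\hat\rho(W(\tau))$ with $\hat\rho$ positive definite and depending only on $\alpha_1,\alpha_2,\rho$, and invoke the comparison lemma (or the Lin--Sontag--Wang construction) to get a uniform $\KL$ bound on $W$, hence on $\norm{\xt-\txt}$ via $\alpha_1^{-1}$. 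You correctly identify the two delicate points, namely that $K(w)$ depends only on $w$ (which gives the required uniformity of $\beta$) and that one needs $\rho$ to be continuous, or at least to admit a continuous positive definite minorant, for $\hat\rho$ to be positive definite and the comparison estimate to exist; the paper's definition of positive definiteness does not include continuity, so this mild extra assumption (universally made in this literature) should be stated explicitly. With that caveat recorded, the argument is complete.
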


\section{Proposed approach}
\label{se:PropApp}

The traditional approach to assess incremental stability of Lur'e systems~\eqref{eq:Lure} is to use the incremental circle criterion (see e.g.~\cite{Zames1966,Fromion1999}). This involves embedding $\varphi$ in a so-called incremental sector.

\begin{definition}
\label{def:IncSector}
	The nonlinearity $\varphi$ is said to belong to the incremental sector $[\kappa_1,\kappa_2]$ if $\kappa_1 \leq (\varphi(q) - \varphi(\tq))/(q-\tq)) \leq \kappa_2$, for all $q,\tq \in \R$, with $q \neq \tq$.
\end{definition}

From Definition~\ref{def:IncSector}, it is clear that a Lipschitz nonlinearity $\varphi$, with Lipschitz constant $L$, belongs to the sector $[-L,L]$. The incremental circle criterion gives conditions to assess incremental stability of \emph{every} nonlinearity inside an incremental sector. By doing so, we obtain tractable conditions to perform the analysis, but at the price of some conservatism. This is due to the fact that, in general, incremental sector conditions provide a very crude description of $\varphi$. To cope with this problem, we propose computing a piecewise-affine approximation $\vpwa$ of the nonlinearity $\varphi$, so that~\eqref{eq:Lure} is transformed into the interconnection of a PWA system with the approximation error:
	\begin{equation}
	\label{eq:LurePWAApp}
		\left\{
	 		\begin{aligned} &
		 		\begin{aligned}
		 			\dxt &= A_i\xt + a_i + Bp_\epsilon(t) \\
	 				\qt &= C_i\xt + c_i + Dp_\epsilon(t)
	 			\end{aligned}   \quad \text{for } \xt \in X_i \\
 				& p_\epsilon(t) = -\epsilon(\qt) \\
 				& x(0) = x_0
	 		\end{aligned}
	 	\right.
	\end{equation}
We shall refer to~\eqref{eq:LurePWAApp} as a \emph{PWA Lur'e system}. We make the assumption that the approximation error $\epsilon$ is Lipschitz with Lipschitz constant $\eta$. The regions $X_i$, for $i \in \I := \{1, \ldots, N\}$, are closed convex polyhedral sets $X_i = \{ x \in X \mid G_ix + g_i \succeq 0 \}$ with non-empty and pairwise disjoint interiors such that $\bigcup_{i\in \I} X_i= X$. Then, $\{X_i\}_{i \in \I}$ constitutes a finite partition of $X$. From the geometry of $X_i$, the intersection $X_i \cap X_j$ between two different regions is always contained in a hyperplane, i.e. $X_i \cap X_j \subseteq \left\{ x \in X \mid E_\ij x + e_\ij = 0\right\}$. The approach is illustrated in Fig.~\ref{fig:AppBlckDgm}, and formalized in the next proposition.

\begin{proposition}
\label{prop:LurePWA}
	Let $\mRi \subset \R$, $i \in \I = \lbrace 1, \ldots, N\rbrace$, be non-empty intervals with pairwise disjoint interiors, such that $\{\mRi\}_{i \in \I}$ forms a partition of $\R$. Let the scalar nonlinearity $\varphi$ in~\eqref{eq:Lure} be decomposed as $\varphi(q) = \vpwa(q) + \epsilon(q)$, with $\vpwa$ a piecewise-affine function given by $\vpwa(q) = r_iq + s_i$, for $q \in \mRi$. Then, the Lur'e system~\eqref{eq:Lure} is equivalent to the PWA Lur'e system~\eqref{eq:LurePWAApp}, with $\epsilon(q) := \varphi(q) - \vpwa(q)$, $A_i := A - r_iBC$, $a_i := -s_iB$, $C_i = C$, $c_i = 0$, $D = 0$ and $X_i = \{x \in X \mid Cx \in \mRi\}$.
\end{proposition}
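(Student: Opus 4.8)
The plan is to verify directly that, under the stated substitutions, the defining equations of the PWA Lur'e system~\eqref{eq:LurePWAApp} reduce to those of the original Lur'e system~\eqref{eq:Lure}, so that the two systems generate exactly the same state trajectories from every initial condition. This is essentially an algebraic bookkeeping argument: there is no analytic difficulty, only the need to be careful about which region a state lies in and how the partition $\{\mRi\}$ of $\R$ pulls back to the partition $\{X_i\}$ of $X$.

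First I would fix $i \in \I$ and a state $x \in X_i = \{x \in X \mid Cx \in \mRi\}$. On this region, by definition of $\vpwa$ we have $\vpwa(Cx) = r_i (Cx) + s_i$, and since $\epsilon = \varphi - \vpwa$, the feedback signal in~\eqref{eq:Lure} satisfies $p = -\varphi(q) = -\vpwa(q) - \epsilon(q) = -r_i Cx - s_i + p_\epsilon$, where $p_\epsilon = -\epsilon(q)$. Substituting this into $\dot{x} = Ax + Bp$ gives $\dot{x} = Ax - r_i BCx - s_i B + B p_\epsilon = (A - r_i BC)x + (-s_i B) + B p_\epsilon$, which is exactly $\dot{x} = A_i x + a_i + B p_\epsilon$ with the claimed $A_i, a_i$. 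The output equation is immediate: $q = Cx = C_i x + c_i + D p_\epsilon$ with $C_i = C$, $c_i = 0$, $D = 0$. The same initial condition $x(0) = x_0$ is carried over unchanged. Hence on each region the dynamics agree, and since the forcing term $p_\epsilon = -\epsilon(Cx)$ is a globally defined Lipschitz function of the state (with constant $\eta = L_\epsilon$ inherited from the Lipschitz constants of $\varphi$ and $\vpwa$), the right-hand side of~\eqref{eq:LurePWAApp} coincides with that of~\eqref{eq:Lure} at every point.

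It then remains to check that $\{X_i\}_{i\in\I}$ is a legitimate partition of $X$ of the polyhedral type required in the definition of a PWA system, and that the overlaps behave correctly. Since $\{\mRi\}$ partitions $\R$ with pairwise disjoint interiors, pulling back through the (surjective-onto-its-image, linear) map $x \mapsto Cx$ shows $\bigcup_i X_i = X$ and that $\inte(X_i) \cap \inte(X_j) = \emptyset$ for $i \neq j$; each $X_i$ is the intersection of $X$ with the preimage of an interval, hence polyhedral, and can be written as $\{x \in X \mid G_i x + g_i \succeq 0\}$ by encoding the two (or one, for unbounded end intervals) inequalities defining $\mRi$ composed with $C$. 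On an overlap $X_i \cap X_j$ one has $Cx$ equal to the common endpoint of $\mRi$ and $\mRj$, so $X_i \cap X_j \subseteq \{x \mid Cx = \text{const}\}$, a hyperplane, matching the $E_{ij}x + e_{ij} = 0$ condition; moreover $\vpwa$ being a (continuous) piecewise-affine function forces $r_i \theta + s_i = r_j \theta + s_j$ at that endpoint $\theta$, so the two affine pieces of~\eqref{eq:LurePWAApp} agree on $X_i \cap X_j$ and the vector field is well defined there.

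The argument is routine throughout; if anything is a mild obstacle it is only the pedantic point of making explicit how the scalar partition $\{\mRi\}$ of $\R$ induces the polyhedral partition $\{X_i\}$ of $X \subseteq \R^n$ via $C$ — in particular spelling out the matrices $G_i, g_i$ and $E_{ij}, e_{ij}$ — and confirming that continuity of $\vpwa$ (implicit in calling it a piecewise-affine \emph{function} with $\varphi$ Lipschitz and $\epsilon$ Lipschitz) guarantees consistency of the affine pieces on shared boundaries. Once these points are noted, the equivalence of~\eqref{eq:Lure} and~\eqref{eq:LurePWAApp} follows, completing the proof. \qed
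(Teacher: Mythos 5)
Your proof is correct and follows essentially the same route as the paper: a direct substitution of $\varphi = \vpwa + \epsilon$ into the feedback equation, using $\vpwa(Cx) = r_i Cx + s_i$ on $X_i$ to read off $A_i$, $a_i$, and the output data. The additional checks you carry out --- that $\{X_i\}$ pulls back to a polyhedral partition of $X$ via $C$ and that continuity of $\vpwa$ makes the affine pieces consistent on shared boundaries --- are left implicit in the paper's one-line argument, but they are consistent with it and do not change the approach.
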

\begin{proof}
	The proof follows after straightforward manipulations. Indeed, it suffices to replace $\varphi(q)$ by the sum $\vpwa(q) + \epsilon(q)$. Then, using the fact that $\vpwa(q) = r_iq + s_i = r_iCx + s_i$, the nonlinear system~\eqref{eq:Lure} may be rewritten as
\begin{align}
	\dx &= Ax - B(\vpwa(q) + \epsilon(q)) \nonumber \\
	&= Ax - B(r_iCx + s_i + \epsilon(q)) \nonumber \\
	&= (A - r_iBC)x - s_iB - B\epsilon(q)  \\
	&=: A_ix + a_i + Bp_\epsilon \qedal 
\end{align}
\end{proof} 

By performing analysis on~\eqref{eq:LurePWAApp}, we replace the test for every $\varphi \in \lbrace \varphi \mid \varphi \in [-L,L] \rbrace$ by the test for every $\varphi \in \lbrace \varphi \mid \varphi = \vpwa + \epsilon, \text{ with } \epsilon \in [-\eta,\eta]\rbrace$. As we are able to control the approximation error through the refinement of $\vpwa$ (and thus to control $\eta$), this allows us to obtain a PWA Lur'e system whose nonlinearity is described by much tighter sector bounds (see Fig.~\ref{fig:SectorBounds}). Hence, the analysis provides potentially less conservative results for the incremental analysis of Lur'e systems. The approach is presented in the next algorithm.
\begin{figure}
	\centering
		\includegraphics[width=0.7\linewidth]{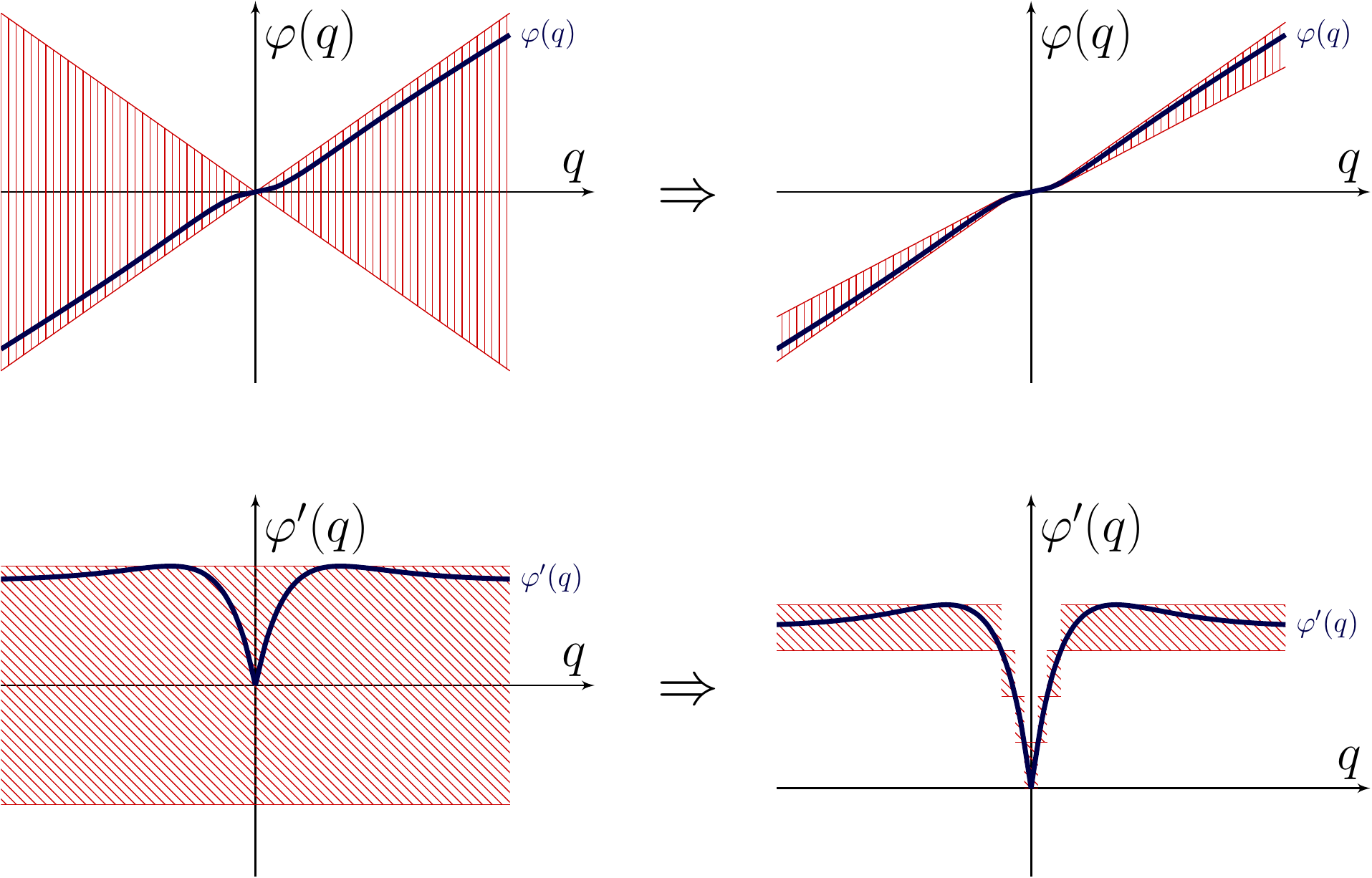}
	\caption{Comparison between the sectors describing the nonlinearity $\varphi$ for the incremental circle criterion (left) and the piecewise-affine approach (right).}
	\label{fig:SectorBounds}
\end{figure}

\begin{algorithm}
\label{alg:Method}
	Given a Lur'e system~\eqref{eq:Lure} with a memoryless Lipschitz nonlinearity $\varphi$:
	\begin{enumerate}
		\item Compute a piecewise-affine approximation $\vpwa$ so that $\epsilon = \varphi - \vpwa$ is Lipschitz, with a Lipschitz constant $\eta$ smaller than a given upper bound $\etaref$.
		\item Use Proposition~\ref{prop:LurePWA} to construct an equivalent PWA Lur'e system~\eqref{eq:LurePWAApp} from~\eqref{eq:Lure}.
		\item Assess incremental asymptotic stability of~\eqref{eq:LurePWAApp}, and, if positive, conclude on the incremental asymptotic stability of~\eqref{eq:Lure}.
	\end{enumerate}
\end{algorithm}

To apply Algorithm~\ref{alg:Method}, we need to consider two questions: how to assess incremental asymptotic stability of PWA Lur'e systems, and how to construct piecewise-affine approximations ensuring an upper bound on the Lipschitz constant of the approximation error (and thus on its incremental sector bounds). These problems shall be addressed in the next sections.

\section{Incremental stability of PWA Lur'e systems}
\label{se:IncStabPWALure}

In this section we propose conditions to assess incremental asymptotic stability of PWA Lur'e systems given by~\eqref{eq:LurePWAApp}. The results are based on the construction of a piecewise-quadratic incremental Lyapunov function and application of Theorem~\ref{th:LurePWA_dAS}. 

When studying incremental properties, it is standard to consider a fictitious augmented system (see e.g.~\cite{Angeli2002,Fromion1997}). Considering the PWA structure of~\eqref{eq:LurePWAApp}, we can define an augmented system given by
\begin{equation}
\label{eq:PWAsysUYaug}
	\left\lbrace 
	\begin{aligned} 
		&\begin{aligned}
			\dbxt &= \bA_\ij\bxt + \bB\bpt \\
			\bqt &= \bC_\ij\bxt + \bD\bpt
		\end{aligned} & \text{for } \bxt\in X_\ij\\ 
		&\bx(0) = \bx_0
	\end{aligned}\right.
\end{equation}
where $\bx = \col(x,\tx,1)$, $\bp = \col(p_\epsilon,\tp_\epsilon)$, $\bq = q - \tq$ and
\begin{equation}
\label{eq:AugMat}
\begin{aligned}
	\bA_\ij &= \nmatrix{ccc}{A_{i} & 0 & a_{i} \\ 0 & A_{j} & a_{j} \\ 0 & 0 & 0} &
	\bB &= \nmatrix{cc}{B & 0 \\ 0 & B \\ 0 & 0}
	\\
	\bC &= \nmatrix{ccc}{C & -C & \phantom{-}0}
	&
	\bD &= \nmatrix{cc}{D & -D}.
\end{aligned}
\end{equation}

The space $\bX$ is defined as $\bX = X \times X \times \{1\}$, and regions $X_\ij$ are defined as $X_\ij = \{\bx \in \bX \mid x \in X_i \text{ and } \tx \in X_j\}$. Each region $X_\ij$ is described by $X_\ij = \{ \bx \in \bX \mid \bG_\ij \bx \succeq 0 \}$ where
\begin{equation}
 	\bG_\ij =  \nmatrix{ccc}{G_{i} & 0 & g_{i} \\ 0 & G_{j} & g_{j}}.
\end{equation}

Analogously to the state partition $\{X_i\}_{i \in \I}$ of system~\eqref{eq:LurePWAApp}, the intersection between any two regions $X_\ij$ and $X_{kl}$ of~\eqref{eq:PWAsysUYaug} is either empty or contained in the hyperplane given by
\begin{equation}
\label{eq:AugSysHypP}
	X_\ij \cap X_{kl} \subseteq \left\{ \bx \in \bX \mid \bE_{ijkl} \bx = 0\right\}.
\end{equation}

We shall propose conditions to compute an incremental Lyapunov function possessing the following piecewise-quadratic structure:
\begin{equation}
\label{eq:PWQStorFunc}
	V(x,\tx) = \begin{cases}
 			(x - \tx)^TP_i(x - \tx) & \!\text{for } \bx \in X_{ii} \\
 			\bx^T\bP_\ij\bx & \!\text{for } \bx \in X_\ij, \, i \neq j
 		\end{cases}
\end{equation}
As presented in~\cite{Waitman2016}, the choice of a quadratic function on $(x - \tx)$ on regions $X_{ii}$ does not lead to any loss of generality. Indeed, it is a consequence of the fact that $V(x,x) = 0$, for every $x \in X$, due to~\eqref{eq:dASnorm}.

Let us denote by $I_n$ the $n \times n$ identity matrix, and let $\bI_n \in \R^{2n\times 2n}$ and $\bJ_n \in \R^{(2n+1)\times (2n+1)}$ denote the following matrices
\begin{align}
 	\bI_n &= \nmatrix{cc}{I_n & -I_n \\ -I_n & I_n} & 
 	\bJ_n &= \nmatrix{ccc}{I_n & -I_n & 0 \\ -I_n & I_n & 0 \\ 0\TBS & 0 & 0}.
\end{align}
We are then able to state the following theorem.

\begin{theorem}
\label{th:LurePWA_dAS}
	Let~\eqref{eq:LurePWAApp} be a PWA Lur'e system, and let $\epsilon$ be Lipschitz continuous with Lipschitz constant $\eta > 0$. If there exist symmetric matrices $P_{i} \in \R^{n \times n}$ and $\bP_{ij} \in \R^{(2n+1) \times (2n+1)}$; $U_{ij}$, $R_{ij}$, $W_{ij} \in \R^{p_{ij} \times p_{ij}}$ with nonnegative coefficients and zero diagonal; $L_{ijkl} \in \R^{(2n+1)\times 1}$ and positive scalars $\sigma_1, \sigma_2, \sigma_3$ such that
	\begin{equation}
	\label{eq:ContThm1}
		\begin{cases}
			P_{i} - \sigma_1I_n \succeq 0 \\
			P_{i} - \sigma_2I_n \preceq 0 \\
			\nmatrix{cc}{A_i^TP_{i} + P_{i}A_i + C^TC + \sigma_3I_n & P_iB + C^TD \\ \bullet & D^TD- \eta^{-2}I_p} \preceq 0
		\end{cases}
	\end{equation}
	for $i \in \I$,
	\begin{equation}
	\label{eq:ContThm2} 
		\begin{cases}
			\bP_{ij} - \sigma_1\bJ_n - \bE_{ij}^TU_{ij}\bE_{ij} \succeq 0 \\
			\bP_{ij} - \sigma_2\bJ_n + \bE_{ij}^TR_{ij}\bE_{ij} \preceq 0 \\
			\nmatrix{cc}{
				\left({ \begin{smallmatrix} 
      				\smath{\bA_\ij^T\overline{P}_\ij + \overline{P}_\ij \bA_\ij + \bC_\ij^T\bC_\ij + {}} \\ \smath{\sigma_3\bJ_n + \bE_{ij}^TW_{ij}\bE_{ij}}
    			\end{smallmatrix}}\right) & 
				\bP_\ij\bB + \bC_\ij^T\bD \\
				\bullet &
				\bD^T\bD- \eta^{-2}\overline{I}_p} \preceq 0
		\end{cases}
	\end{equation}
	for $(i,j) \in \I \times \I$, $i \neq j$, and
	\begin{equation}
	\label{eq:ContThm3}
		\bP_{ij} = \bP_{kl} + L_{ijkl}\bE_{ijkl} + \bE_{ijkl}^TL_{ijkl}^T
	\end{equation}
	for $(i,j),(k,l)$ such that $X_\ij \cap X_{kl} \neq \varnothing$ are satisfied, then the Lur'e system~\eqref{eq:LurePWAApp} is incrementally asymptotically stable.
\end{theorem}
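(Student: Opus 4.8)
The plan is to show that the piecewise-quadratic function $V$ of~\eqref{eq:PWQStorFunc} is an incremental Lyapunov function satisfying the hypotheses of Theorem~\ref{th:dAS}, with $\alpha_1(r)=\sigma_1 r^2$ and $\alpha_2(r)=\sigma_2 r^2$ (both of class $\Koo$) and $\rho(r)=\sigma_3 r^2$ (positive definite). It is convenient to first write the diagonal pieces homogeneously: on $X_{ii}$ one has $(x-\tx)^TP_i(x-\tx)=\bx^T\bP_{ii}\bx$ with $\bx=\col(x,\tx,1)$ and $\bP_{ii}$ the lifting of $P_i$ having the block pattern of $\bJ_n$; since $\bx^T\bJ_n\bx=\norm{x-\tx}^2$, this makes $V$ a quadratic form in $\bx$ on every cell, and~\eqref{eq:ContThm1} becomes the $i=j$ instance of the remaining conditions with the \Sproc\ matrices set to zero.

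I would then establish continuity of $V$ and the bounds~\eqref{eq:dASnorm}. For continuity, on a common face $X_\ij\cap X_{kl}$ we have $\bE_{ijkl}\bx=0$ by~\eqref{eq:AugSysHypP}, so~\eqref{eq:ContThm3} gives $\bx^T\bP_\ij\bx=\bx^T\bP_{kl}\bx+2\bx^T\bE_{ijkl}^TL_{ijkl}^T\bx=\bx^T\bP_{kl}\bx$; the same computation covers faces between diagonal and off-diagonal cells, and $V(x,x)=0$. For the bounds, on $X_{ii}$ they follow at once from $\sigma_1 I_n\preceq P_i\preceq\sigma_2 I_n$; on $X_\ij$ with $i\neq j$ I would invoke the \Sproc, noting that $U_\ij,R_\ij$ have nonnegative entries while $\bE_\ij$ collects the linear inequalities cutting out $X_\ij$, so $\bx^T\bE_\ij^TU_\ij\bE_\ij\bx\ge 0$ and $\bx^T\bE_\ij^TR_\ij\bE_\ij\bx\ge 0$ on that cell, whence the first two LMIs of~\eqref{eq:ContThm2} yield $\sigma_1\norm{x-\tx}^2\le\bx^T\bP_\ij\bx\le\sigma_2\norm{x-\tx}^2$ there.

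The core is the dissipation inequality~\eqref{eq:dASnegdef}. Fix trajectories $x(\cdot),\tx(\cdot)$; for almost every $t$ the augmented state $\bxt$ lies in the interior of some cell $X_\ij$, where $\dbxt=\bA_\ij\bxt+\bB\bpt$ and hence $\dot V=2\bxt^T\bP_\ij\dbxt$. The key observation is that the Lipschitz property of the approximation error, $(p_\epsilon-\tp_\epsilon)^2\le\eta^2(q-\tq)^2$, is exactly the quadratic constraint $\col(\bx,\bp)^T\,\nmatrix{cc}{\bC_\ij^T\bC_\ij & \bC_\ij^T\bD \\ \bullet & \bD^T\bD-\eta^{-2}\overline{I}_p}\col(\bx,\bp)\ge 0$, since $\bp^T\overline{I}_p\bp=(p_\epsilon-\tp_\epsilon)^2$ and $q-\tq=\bC_\ij\bx+\bD\bp$. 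Multiplying the third LMI of~\eqref{eq:ContThm2} on both sides by $\col(\bx,\bp)$ then gives, on $X_\ij$,
\begin{equation*}
	\dot V+\sigma_3\norm{x-\tx}^2+\big[(q-\tq)^2-\eta^{-2}(p_\epsilon-\tp_\epsilon)^2\big]+\bx^T\bE_\ij^TW_\ij\bE_\ij\bx\le 0,
\end{equation*}
and because the bracketed term is nonnegative (the Lipschitz constraint) and the last is nonnegative (the \Sproc\ on $X_\ij$), we obtain $\dot V\le-\sigma_3\norm{x-\tx}^2$; the diagonal cells are treated the same way through~\eqref{eq:ContThm1}. Since $V$ is continuous and the trajectories are absolutely continuous (the PWA vector field being continuous, as $\vpwa$ is), $t\mapsto V(\xt,\txt)$ is absolutely continuous and integrating the last inequality gives~\eqref{eq:dASnegdef} with $\rho(r)=\sigma_3 r^2$; Theorem~\ref{th:dAS} then yields incremental asymptotic stability.

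I expect the main obstacle to be the gluing step, i.e.\ making rigorous that the per-cell bound on $\dot V$ integrates correctly along a trajectory that repeatedly crosses (or slides along) cell boundaries: one must argue that the set of times at which $\bxt$ sits on a face is negligible, or else that the estimate persists there thanks to continuity of $V$ across faces (ensured by~\eqref{eq:ContThm3}) and of the dynamics. A lesser, bookkeeping-type point is verifying that the quadratic form read off from the incremental sector $[-\eta,\eta]$ of $\epsilon$ is precisely the coupling/$(2,2)$ block appearing in~\eqref{eq:ContThm1}--\eqref{eq:ContThm2}; once that identification is pinned down, the rest is Schur-type rearrangement.
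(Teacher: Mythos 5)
Your proposal is correct and follows essentially the same route as the paper's own proof: continuity of $V$ from the matching condition~\eqref{eq:ContThm3}, the bounds $\sigma_1\norm{x-\tx}^2 \leq V \leq \sigma_2\norm{x-\tx}^2$ via the S-procedure terms on each cell, and the dissipation inequality obtained by pre- and post-multiplying the third LMIs by $\col(\bx,\bp)$ and discharging the term $(q-\tq)^2 - \eta^{-2}(p_\epsilon-\tp_\epsilon)^2$ using the Lipschitz bound on $\epsilon$, before invoking Theorem~\ref{th:dAS}. The only cosmetic differences are that you homogenize the diagonal cells and phrase the last step as an almost-everywhere differential inequality integrated along absolutely continuous trajectories, whereas the paper integrates cell-by-cell over dwell intervals and sums using continuity of $V$; the boundary-crossing subtlety you flag is present, and treated at the same level of informality, in the paper's proof as well.
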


\begin{proof}
	According to Theorem~\ref{th:dAS}, \eqref{eq:LurePWAApp} is incrementally asymptotically stable if there exists a continuous incremental Lyapunov function $V$, which is lower and upper bounded by class $\Koo$ functions, and respects the integral constraint~\eqref{eq:dASnegdef}. We shall prove the theorem by showing that feasibility of~\eqref{eq:ContThm1}--\eqref{eq:ContThm3} implies the existence of such a function possessing the structure~\eqref{eq:PWQStorFunc}.
	
	\noindent\emph{Continuity} -
	We first show that $V$ is a continuous function of $\bx$. This is clearly the case inside every cell, so we just need to show continuity on the boundaries. From~\eqref{eq:AugSysHypP}, $\bE_{ijkl}\bx = 0$ for all $\bx \in X_\ij \cap X_{kl}$, then~\eqref{eq:ContThm3} implies that $ \bx^T\bP_\ij\bx = \bx^T\bP_{kl}\bx$ for $\bx \in X_\ij \cap X_{kl}$ and hence that $V$ is continuous.
	
	\noindent\emph{Norm bounds} -
	The first inequality in~\eqref{eq:ContThm2}, post and pre multiplied respectively by $\bx$ and $\bx^T$, implies that $\bx^T\bP_\ij\bx - \sigma_1\norm{x - \tx}^2 \geq \bx^T\bG_{ij}^TU_{ij}\bG_{ij}\bx$. Since $U_{ij}$ is composed of nonnegative coefficients, the right-hand side of the previous inequality is nonnegative whenever $\bx \in X_\ij$. This implies that
	\begin{equation}
	\label{eq:PWQPosDefXij_dAS}
	 	\bx^T\bP_\ij\bx  \geq \sigma_1\norm{x - \tx}^2 \qquad \text{for } \bx \in X_{ij}.
	\end{equation}
	
	The first inequality in~\eqref{eq:ContThm1} implies that $V(x,\tx) \geq \sigma_1\norm{x - \tx}^2$ for all $\bx \in X_{ii}$. With~\eqref{eq:PWQPosDefXij_dAS}, this guarantees that	
	\begin{equation}
	\label{eq:PWQPosDef_dAS}
		V(x,\tx) \geq \sigma_1\norm{x - \tx}^2, \quad\forall x,\tx \in X.
	\end{equation}
	
	Proceeding exactly as before, the second inequalities in~\eqref{eq:ContThm1} and~\eqref{eq:ContThm2} imply that
	\begin{equation}
	\label{eq:PWQUpBound_dAS}
		V(x,\tx) \leq \sigma_2\norm{x - \tx}^2, \quad\forall x,\tx \in X.
	\end{equation}
	Inequalities~\eqref{eq:PWQPosDef_dAS} and~\eqref{eq:PWQUpBound_dAS} imply that the continuous piecewise quadratic function $V$ given by~\eqref{eq:PWQStorFunc} is such that
	\begin{equation}
	 	\sigma_1\norm{x - \tx}^2 \leq V(x,\tx) \leq \sigma_2\norm{x - \tx}^2.
	\end{equation}
		
	\noindent\emph{Integral constraint} -
	We now show that the incremental Lyapunov function respects the integral constraint~\eqref{eq:dASnegdef}. Using the same arguments as before, the last inequality in~\eqref{eq:ContThm2}, post and pre multiplied by $\col(\bx, \bp)^T$ and $\col(\bx, \bp)$, implies that
	\begin{multline}
		\bx^T\bP_\ij(\bA_\ij\bx + \bB\bp)  + (\bA_\ij\bx + \bB\bp)^T\bP_\ij\bx \;+ \\ 
		(\bC_\ij\bx + \bD\bp)^T(\bC_\ij\bx + \bD\bp) - \eta^{-2}\bp^T\overline{I}_p\bp  \leq -\sigma_3\norm{x - \tx}^2
	\end{multline}
	for all $\bp \in \R^2$ and all $\bx \in X_\ij$. Let $t_a$ and $t_b$ be two time instants such that the state trajectory of system~\eqref{eq:PWAsysUYaug} remains in $\Xij$ on the interval $[t_a,t_b]$. By noticing that $\dbx = \bA_\ij\bx + \bB\bp$, and integrating from $t_a$ to $t_b$ along trajectories of~\eqref{eq:PWAsysUYaug}, we have	
	\begin{multline}
	\label{eq:intSij_dAS}
		\!\!\bx(t_b)^T\bP_\ij\bx(t_b) - \bx(t_a)^T\bP_\ij\bx(t_a) + \phantom{a}\\ 
		\int_{t_a}^{t_b}\!\big(\norm{\Delta\qtau}^2 - \eta^{-2}\norm{\Delta p_\epsilon(\tau)}^2\big)\,d\tau \\[-7pt] 
		\leq -\int_{t_a}^{t_b}\! \sigma_3\norm{\Delta\xtau}^2 \,d\tau
	\end{multline}
	with $\Delta x := x - \tx$, and $\Delta q$ and $\Delta p_\epsilon$ similarly defined. The same reasoning can be applied to the last inequality in~\eqref{eq:ContThm1}, post and pre multiplying by $\col(x - \tx,p_\epsilon - \tp_\epsilon)^T$ and $\col(x - \tx,p_\epsilon - \tp_\epsilon)$, which yields 
	\begin{multline}
	\label{eq:intSii_dAS}
		\Delta x(t_b)^TP_i\Delta x(t_b) - \Delta x(t_a)^TP_i\Delta x(t_a) + \phantom{a}\\ 
		\int_{t_a}^{t_b}\!\big(\norm{\Delta\qtau}^2 - \eta^{-2}\norm{\Delta p_\epsilon(\tau)}^2\big)\,d\tau \\[-7pt] 
		\leq -\int_{t_a}^{t_b}\! \sigma_3\norm{\Delta\xtau}^2 \,d\tau.
	\end{multline}
We note that the first terms in~\eqref{eq:intSij_dAS} and~\eqref{eq:intSii_dAS} represent the incremental Lyapunov function~\eqref{eq:PWQStorFunc}. Let us consider a trajectory $\bxtau$, $\forall \tau \in [0,t]$. The time $t_1$ can be decomposed as $t = t - t_{in,n} + \sum_{k=0}^{n-1} (t_{out,k} - t_{in,k})$, with $t_{out,k} = t_{in,k+1}$ and $t_{in,0} = 0$, so that during each time interval $[t_{in,k},t_{out,k}]$ the trajectory stays in a given region. Then, replacing $t_a$ by $t_{in,k}$ and $t_b$ by $t_{out,k}$ in~\eqref{eq:intSij_dAS} and~\eqref{eq:intSii_dAS}, adding up to $n$ for every region $X_\ij$ crossed, and using the continuity of $V$ yields
	\begin{multline}
		V(\xt,\txt) - V(\xo,\txo) + \phantom{0}\\
		\int_{0}^{t}\! \big(\norm{\Delta\qtau}^2 - \eta^{-2}\norm{\Delta p_\epsilon(\tau)}^2\big)\,d\tau \\[-7pt] 
		\leq -\int_{0}^{t}\! \sigma_3\norm{\Delta\xtau}^2 \,d\tau.
	\end{multline}
	 
	Since $\epsilon$ is Lipschitz with a Lipschitz constant equal to $\eta$, the quantity $\norm{\Delta q}^2 - \eta^{-2}\norm{\Delta p_\epsilon}^2$ is always positive, and we obtain
	\begin{equation}
	\label{eq:intV}
		V(\xt,\txt) - V(\xo,\txo) \leq -\!\int_{0}^{t_1}\! \sigma_3\norm{\Delta\qtau}^2 \,d\tau.
	\end{equation}
	
	Then $V$ satisfies the conditions in Theorem~\ref{th:dAS} with $\alpha_i(r) := \sigma_i\norm{r}^2$, for $i \in \{1,2\}$ and $\rho(r) := \sigma_3\norm{r}^2$. The function $V$ is then an incremental Lyapunov function and system~\eqref{eq:LurePWAApp} is incrementally asymptotically stable, which concludes the proof.\hfill$\Box$
\end{proof}

Theorem~\ref{th:LurePWA_dAS} is of independent interest, as it extends the incremental circle criterion to the framework of PWA Lur'e systems. Indeed, by taking $N = 1$, we recover the LMI conditions of the classic incremental circle criterion (see e.g.~\cite{Fromion1999}).

In the proof of Theorem~\ref{th:LurePWA_dAS}, we construct an incremental Lyapunov function that ensures incremental asymptotic stability. Another interpretation can be given in view of the framework of dissipative systems~\citep{Willems1972}. Indeed, Theorem~\ref{th:LurePWA_dAS} can be seen as an incremental small gain theorem between the PWA system and the Lipschitz nonlinearity, where $V$ would play the role of the storage function, with supply rate $w(\bq,\bp,\bx) = \eta^{-2}\norm{p - \tp}^2 - \norm{q - \tq}^2 - \sigma_3\norm{x-\tx}^2$.

\section{Piecewise-affine approximation of scalar nonlinearities}
\label{se:AppNL}

Let us define $\Phi(N)$ as the set of piecewise-affine functions $\vpwa:\R \rightarrow \R$ defined on a partition of size $N$. That is, $\Phi(N)$ is the set of piecewise-affine functions for which there exists a partition $\lbrace\mRi\rbrace_{i\in\I}$ of $\R$, with $\abs{\I} = N$. Then, $\vpwa(q) = r_iq + s_i$, for $q \in \mRi$, where $i \in \I = \lbrace 1, \ldots, N\rbrace$. Since $\varphi$ is continuous and $\epsilon$ is Lipschitz continuous, $\vpwa$ must be continuous. This implies that $\forall q \in \mRi \cap \mR_j$, $r_iq + s_i = r_jq + s_j$. We also fix $\vpwa(0) = 0$, and then whenever $q = 0 \in \mRi$, we have $s_i = 0$. We shall make the following assumption on the nonlinearity $\varphi$.

\begin{assumption}
\label{ass:Varphi}
	The memoryless nonlinearity $\varphi$ is continuously differentiable, i.e. $\varphi \in \mathcal{C}^1(\R)$,  and asymptotically linear, i.e. there exist $k_1,k_2 \in \R$ such that $\lim_{q\rightarrow-\infty} \abs{\varphi'(q) - k_1} = 0$ and $\lim_{q\rightarrow\infty} \abs{\varphi'(q) - k_2} = 0$.
\end{assumption}

Assumption~\ref{ass:Varphi} ensures that we are able to construct an approximation $\vpwa$ with a finite partition, i.e. with $N < \infty$. We are interested in finding $\vpwa$ that best approximates $\varphi$. We shall measure the approximation error by its Lipschitz constant, i.e., by its incremental gain. This may be formalized as

\begin{equation}
\label{eq:OptProb}
	\begin{aligned}
		& \underset{\mathclap{\vpwa \in \Phi(N)}}{\text{minimize}}
		& & \eta \\
		& \text{subject to}
		& & \norm{\epsilon(q) - \epsilon(\tq)} \leq \eta\norm{q - \tq} \\
		&&& q, \tq \in \R
	\end{aligned} \tag{P1}
\end{equation}

As we refine the partition $\{\mRi\}_{i \in \I}$, by choosing a larger $N$, the approximation error decreases, while the complexity of $\vpwa$ increases. This indicates a trade-off between the accuracy of the description and the complexity of the analysis. We shall search for a value of $N$ ensuring a given upper bound $\etaref$ on the Lipschitz constant of the approximation error. This allows us to apply Theorem~\ref{th:LurePWA_dAS} to assess the incremental asymptotic stability of~\eqref{eq:Lure}. The next proposition gives a method to obtain $\vpwa$ respecting the desired upper bound on the approximation.

\begin{proposition}
\label{prop:UpBound}
	Let $\varphi$ be a function satisfying Assumption~\ref{ass:Varphi}. Let $\etaref > 0$, and let $\{\mRi\}_{i\in\I}$, with $\I = \lbrace 1,\ldots,N\rbrace$, be a partition of $\R$ obtained by a uniform division of the image of $\varphi'$ under $\R$, i.e. $l(\varphi'(\mRi)) = l(\varphi'(\mR_j))$, for all $i,j \in \I$, where $l(\cdot)$ denotes the length of an interval. Also, let $r_i = (\sup_{q \in \inte\mRi}\varphi'(q) + \inf_{q \in \inte\mRi}\varphi'(q))/2$ and $s_i$ be chosen to ensure continuity of $\vpwa$. Then, by choosing $N$ such that $l(\varphi'(\mRi)) \leq 2\etaref$, the obtained approximation $\vpwa$ ensures that $\epsilon$ is Lipschitz with a Lipschitz constant $\eta \leq \etaref$.
\end{proposition}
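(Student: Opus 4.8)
The plan is to bound the Lipschitz constant of $\epsilon = \varphi - \vpwa$ by first controlling $\epsilon'$ cell by cell and then propagating the bound across cell boundaries using the continuity of $\epsilon$. On the interior of a cell $\mRi$ the approximation $\vpwa$ is affine with slope $r_i$, so $\epsilon$ is continuously differentiable there with $\epsilon'(q) = \varphi'(q) - r_i$. Since $r_i$ is the midpoint of the interval $\varphi'(\mRi)$ and $\inf_{\inte{\mRi}}\varphi' \le \varphi'(q) \le \sup_{\inte{\mRi}}\varphi'$, this gives
\[
	|\epsilon'(q)| = |\varphi'(q) - r_i| \le \tfrac12\, l\!\left(\varphi'(\mRi)\right) \le \etaref, \qquad q \in \inte{\mRi},
\]
the last inequality being exactly the hypothesis $l(\varphi'(\mRi)) \le 2\etaref$. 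Continuity of $\varphi'$ identifies $\sup_{\inte{\mRi}}\varphi'$ with $\sup_{\mRi}\varphi'$ (and likewise the infimum), and Assumption~\ref{ass:Varphi} ensures these are finite even on the two semi-infinite end cells, on which $\varphi'$ need not attain them.

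Next I would turn this pointwise bound into a global one. Because $\varphi$ is continuous and the constants $s_i$ are chosen precisely to make $\vpwa$ continuous, $\epsilon$ is continuous on all of $\R$; it is, moreover, differentiable with $|\epsilon'| \le \etaref$ off the finite set of cell boundaries. Fix $q < \tq$ and list the cell boundaries lying in $(q,\tq)$, together with the endpoints, as $q = \tau_0 < \tau_1 < \dots < \tau_m = \tq$. On each $[\tau_{k-1},\tau_k]$ the function $\epsilon$ is continuous and differentiable on the open interval with $|\epsilon'| \le \etaref$, so the mean value theorem gives $|\epsilon(\tau_k) - \epsilon(\tau_{k-1})| \le \etaref(\tau_k - \tau_{k-1})$; summing over $k$ and applying the triangle inequality yields $|\epsilon(\tq) - \epsilon(q)| \le \etaref(\tq - q)$. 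Hence $\epsilon$ is Lipschitz with constant $\eta \le \etaref$, which is the claim.

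It remains to argue that a partition with the stated properties, in particular with $l(\varphi'(\mRi)) \le 2\etaref$, can be obtained with a finite number $N$ of cells. By Assumption~\ref{ass:Varphi}, $\varphi'$ is continuous on $\R$ with finite limits $k_1$ at $-\infty$ and $k_2$ at $+\infty$, hence bounded and uniformly continuous. One may pick $b_- < b_+$ with $|\varphi'(q) - k_1| \le \etaref$ for $q \le b_-$ and $|\varphi'(q) - k_2| \le \etaref$ for $q \ge b_+$; the two semi-infinite cells $(-\infty,b_-]$ and $[b_+,+\infty)$ then satisfy $l(\varphi'(\cdot)) \le 2\etaref$, while the compact interval $[b_-,b_+]$ can be split, using uniform continuity, into finitely many subintervals on each of which $l(\varphi'(\cdot)) \le 2\etaref$. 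Requiring in addition that the $\varphi'$-image lengths be equal across cells, as in the statement, is then a routine adjustment of the breakpoints which leaves the bound intact.

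The delicate point is the passage from the cellwise estimate on $\epsilon'$ to a Lipschitz bound on $\epsilon$ valid \emph{across} cells: this is precisely where continuity of $\epsilon$ --- equivalently, the choice of the $s_i$ --- is indispensable, since otherwise the jump of $\epsilon$ at a boundary would be unconstrained. A minor technical nuisance, noted above, is that on the two semi-infinite end cells $\varphi'$ may fail to attain its supremum or infimum, so that one must work with $\sup$ and $\inf$ rather than $\max$ and $\min$ and invoke the boundedness of $\varphi'$ granted by Assumption~\ref{ass:Varphi}.
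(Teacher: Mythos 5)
Your proof is correct and follows essentially the same route as the paper: choose $r_i$ as the midpoint of $\varphi'(\mRi)$ so that $|\epsilon'| \le l(\varphi'(\mRi))/2 \le \etaref$ on each cell interior, then pass from this derivative bound to the Lipschitz bound. The only difference is that you justify that last passage explicitly (mean value theorem on each subinterval plus telescoping across the finitely many breakpoints, using continuity of $\epsilon$), whereas the paper simply invokes the equivalence between Lipschitz continuity and an almost-everywhere bound on the derivative; your version is, if anything, the more careful one, and the added discussion of why a finite partition exists is harmless but not needed since the partition is part of the hypothesis.
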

\begin{proof}
	We first use the fact that Lipschitz continuity is equivalent to boundedness of the derivative, for almost every $q \in \R$. Then, we show that the proposed partition method ensures the desired upper bound on the Lipschitz constant.
	
	We begin by recalling a known fact about Lipschitz functions. Let $\eta > 0$. For an arbitrary partition $\{\mRi\}_{i\in\I}$, the following two statements are equivalent:
	\begin{samepage}
	\begin{enumerate}[(i)]
		\item $\norm{\epsilon(q) - \epsilon(\tq)} \leq \eta\norm{q - \tq}$, for all $q,\tq \in \R$.
		\item $\norm{\epsilon'(q)} \leq \eta$, for almost all $q \in \R$.
	\end{enumerate}
	\end{samepage}
	We recall that $\epsilon'(q) = \varphi'(q) - r_i$, for all $q \in \inte\mRi$. Let $\eta_i > 0$ be such that $\sup_{q \in \inte\mRi} \norm{\epsilon'(q)} \leq \eta_i$. By choosing $r_i = (\sup_{q \in \inte\mRi}\varphi'(q) + \inf_{q \in \inte\mRi}\varphi'(q))/2$, we ensure that $\eta_i = l(\varphi'(\mRi))/2$. Since $\varphi$ is Lipschitz continuous, its derivative is bounded on $\R$. Then, we can use the proposed partition so that the image of $\varphi'$ under $\R$ is uniformly divided, and we have $\eta_i = l(\varphi'(\mRi))/2 \leq \etaref$, $\forall i \in \I$. Then, by defining $\eta = \eta_i$ and using the equivalent statements in the beginning of the proof, we have that $\norm{\epsilon(q) - \epsilon(\tq)} \leq \eta\norm{q - \tq}$, for all $q, \tq \in \R$, with $\eta \leq \etaref$, which concludes the proof. \hfill $\Box$
\end{proof}

The regions $\mRi = [q_i,q_{i+1}]$ can be defined by solving scalar nonlinear equations, which can be done by standard techniques such as the bisection method. We remark that, since $\varphi$ is asymptotically linear, the leftmost and rightmost regions $\mRi$ may be unbounded.

One could wonder whether the partition method in Proposition~\ref{prop:UpBound} gives the optimal solution to~\eqref{eq:OptProb}. It turns out that this is true, provided that $\varphi$ satisfies some new assumptions, as stated in the following.

\begin{assumption}
\label{ass:VarphiOpt}
	The memoryless nonlinearity $\varphi$ is odd, monotone, and so that $\varphi'$ is nondecreasing on $\R_+$.
\end{assumption}

\begin{proposition}
\label{prop:OptPart}
	Let $\varphi$ be a nonlinear function respecting Assumptions~\ref{ass:Varphi} and~\ref{ass:VarphiOpt}. Then, the partition method described in Proposition~\ref{prop:UpBound} yields $\vpwa$ that is the optimal solution to~\eqref{eq:OptProb}. 
\end{proposition}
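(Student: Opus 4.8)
The plan is to first reduce \eqref{eq:OptProb}, for a fixed partition size $N$, to a purely geometric problem about $\varphi'$, and then to solve that problem using the structure imposed by Assumptions~\ref{ass:Varphi} and~\ref{ass:VarphiOpt}. For a fixed admissible partition $\{\mRi\}_{i\in\I}$, continuity of $\vpwa$ together with $\vpwa(0)=0$ determines the offsets $s_i$ once the slopes $r_i$ are chosen, so the $r_i$ are the only free parameters. By the equivalence (i)$\Leftrightarrow$(ii) used in the proof of Proposition~\ref{prop:UpBound}, $\epsilon$ is Lipschitz with constant $\eta$ iff $\abs{\epsilon'(q)}=\abs{\varphi'(q)-r_i}\le\eta$ for a.e.\ $q\in\mRi$, so the least constant available on a given partition is $\max_{i}\inf_{r_i}\sup_{q\in\inte\mRi}\abs{\varphi'(q)-r_i}=\tfrac12\max_i l(\varphi'(\mRi))$, attained by the Chebyshev centre $r_i=\bigl(\sup_{\inte\mRi}\varphi'+\inf_{\inte\mRi}\varphi'\bigr)/2$ --- precisely the choice of Proposition~\ref{prop:UpBound}. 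Hence solving \eqref{eq:OptProb} amounts to minimizing $\max_i l(\varphi'(\mRi))$ over all partitions of $\R$ into $N$ intervals, and it remains to show the partition of Proposition~\ref{prop:UpBound} attains this minimum.

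Second, I would record the structure of $\varphi'$. By Assumption~\ref{ass:VarphiOpt}, $\varphi$ odd makes $\varphi'$ even, and ``$\varphi'$ nondecreasing on $\R_+$'' together with evenness makes $\varphi'$ unimodal: nonincreasing on $(-\infty,0]$, nondecreasing on $[0,\infty)$, with minimum $m:=\varphi'(0)$ at the origin. Assumption~\ref{ass:Varphi} and evenness force a common limit $\lim_{q\to\pm\infty}\varphi'(q)=k$, so $\varphi'(\R)=[m,k]$ and $l(\varphi'(\mRi))$ equals the oscillation $\operatorname{osc}(\varphi',\mRi)=\sup_{\mRi}\varphi'-\inf_{\mRi}\varphi'$. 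The key consequence is that $\varphi'$ is monotone on each half-line $\R_\pm$, so along any run of consecutive pieces contained in one half-line the oscillations telescope and sum exactly to the oscillation over their union.

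Third comes the lower bound, by a covering count. At most one piece, say $\mR_\ast$, contains $0$ in its interior; let $p$ (resp.\ $q$) be the number of remaining pieces lying in $(-\infty,0]$ (resp.\ $[0,\infty)$), so $p+q+1=N$ --- or $p+q=N$ with $\mR_\ast$ absent if $0$ is a breakpoint. Put $h:=\max_i l(\varphi'(\mRi))$. Since $\varphi'$ is monotone on $(-\infty,0]$ and ranges over $[m,k]$ there, the images of the $p$ left pieces together with the portion of $\varphi'(\mR_\ast)$ attained at negative arguments (of length at most $h$) must cover $[m,k]$; as each of these has length $\le h$, a covering argument gives $p+1\ge(k-m)/h$ (and $p\ge(k-m)/h$ when $\mR_\ast$ is absent), and symmetrically on $[0,\infty)$. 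Adding the two bounds and using integrality of $p,q$ yields $h\ge(k-m)/\lceil N/2\rceil$ in every case. Finally, the partition of Proposition~\ref{prop:UpBound}, which by evenness of $\varphi'$ is symmetric about $0$ --- $\lceil N/2\rceil$ equal-oscillation pieces on $[0,\infty)$, one of them centred at the origin and absorbing the excess when $N$ is odd, together with their mirror images --- realizes $\max_i l(\varphi'(\mRi))=(k-m)/\lceil N/2\rceil$ exactly, hence is a minimizer; by the reduction of the first step, the associated $\vpwa$ solves \eqref{eq:OptProb}.

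I expect the lower bound to be the crux: one must rule out, uniformly over all $N$-piece partitions, that an off-centre breakpoint or a piece straddling $0$ could beat the symmetric equal-oscillation partition. The telescoping identity disposes of the two monotone halves cleanly, but the straddling piece couples the two halves, and extracting the sharp constant $(k-m)/\lceil N/2\rceil$ from the two covering inequalities requires separating the parity of $N$ (for odd $N$ the optimal configuration genuinely uses a centred piece). Everything else --- the reduction to oscillations, optimality of the Chebyshev centre, and the explicit value attained by the Proposition~\ref{prop:UpBound} partition --- is routine.
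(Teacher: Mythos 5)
Your proposal is correct, and for the key step it takes a genuinely different route from the paper. The first part --- reducing \eqref{eq:OptProb} on a fixed partition to minimizing $\max_i l(\varphi'(\mRi))$ via the Chebyshev-centre choice of $r_i$ --- is the same in both arguments. Where you diverge is in proving optimality of the uniform-image partition. The paper invokes oddness to restrict attention to $\R_+$ at the outset, then uses the fact that $\varphi'$ is continuous and monotone there so that the oscillations of consecutive pieces telescope, giving the invariant $\sum_i \eta_i = \ell(\varphi'(\R_+))/2$ independent of the partition; minimizing a maximum subject to a fixed sum then forces all $\eta_i$ equal. You instead work on all of $\R$, exploit evenness and unimodality of $\varphi'$, and run a covering count on each half-line (carefully accounting for a piece straddling the origin and for asymmetric partitions) to get the sharp lower bound $h \geq \ell(\varphi'(\R_+))/\lceil N/2\rceil$, which the symmetric partition attains. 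Your version is somewhat longer but actually closes a small gap in the paper's argument: the paper's opening symmetrization (``we can focus on $\R_+$ and obtain the remaining by symmetry'') implicitly assumes the optimal partition of $\R$ may be taken symmetric, whereas your counting argument rules out asymmetric or origin-straddling competitors explicitly. The paper's telescoping identity, in exchange, is shorter and makes transparent why the uniform division of the \emph{image} of $\varphi'$ (rather than of its domain) is the right object.
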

\begin{proof}
	Due to the oddness of $\varphi$, we can focus on $\R_+$ and obtain the remaining by symmetry. Let $\{\mRi\}_{i \in \I}$ be an arbitrary partition of $\R_+$, with $\I = \{0,\ldots,m\}$. Also, let $\eta_i > 0$ be as in Proposition~\ref{prop:UpBound}. Then, by taking $\eta := \max_{i\in\I} \eta_i$, we have that $\norm{\epsilon'(q)} \leq \eta$, for almost all $q \in \R_+$. It is clear that, for each region, the choice of $r_i$ that minimizes $\eta_i$ is given by $r_i = (\sup_{q \in \inte\mRi}\varphi'(q) + \inf_{q \in \inte\mRi}\varphi'(q))/2$. In this case, we have $\eta _i = (\sup_{q \in \inte\mRi}\varphi'(q) - \inf_{q \in \inte\mRi}\varphi'(q))/2$. As $\varphi$ is Lipschitz, $\varphi'$ is bounded on $\R_+$. Since the derivative $\varphi'$ is continuous and nondecreasing on $\R_+$, we have 
	\begin{align}
	 	\sum_{i=0}^{m} \eta_i &= \sum_{i = 0}^{m} \frac{\sup_{q \in \inte{\mRi}}\varphi'(q) - \inf_{q \in \inte\mRi}\varphi'(q)}{2} \nonumber \\
	 	&= \frac{\ell(\varphi'(\R_+))}{2}.
	\end{align}
	From this, we are interested in minimizing $\eta = \max_{i\in\I} \eta_i$, subject to $\eta_i \geq 0$ and $\sum_{i=0}^{m} \eta_i=\ell(\varphi'(\R_+))/2$. The minimum is obtained when all $\eta_i$ have the same value, which is obtained by taking a partition such that the image of $\varphi'$ under $\R_+$ is uniformly divided. This yields $\eta = \ell(\varphi'(\R_+))/(2(m+1))$. Then, proceeding as in Proposition~\ref{prop:UpBound}, we conclude that $\vpwa$ obtained by this method ensures that $\norm{\epsilon(q) - \epsilon(\tq)} \leq \eta\norm{q - \tq}$, for all $q,\tq \in \R$, with $\eta$ minimal. \hfill$\Box$
\end{proof}

Despite the fact that Problem~\eqref{eq:OptProb} is non-convex due to the need to define the partition $\{\mRi\}_{i \in \I}$, Proposition~\ref{prop:OptPart} shows that, in the case where $\varphi$ satisfies Assumption~\ref{ass:VarphiOpt}, the optimal solution is known and quite easy to compute. The partitioning strategy is illustrated in Fig.~\ref{fig:PhiPartition}.
\begin{figure}
	\centering
		\includegraphics[width=0.5\linewidth]{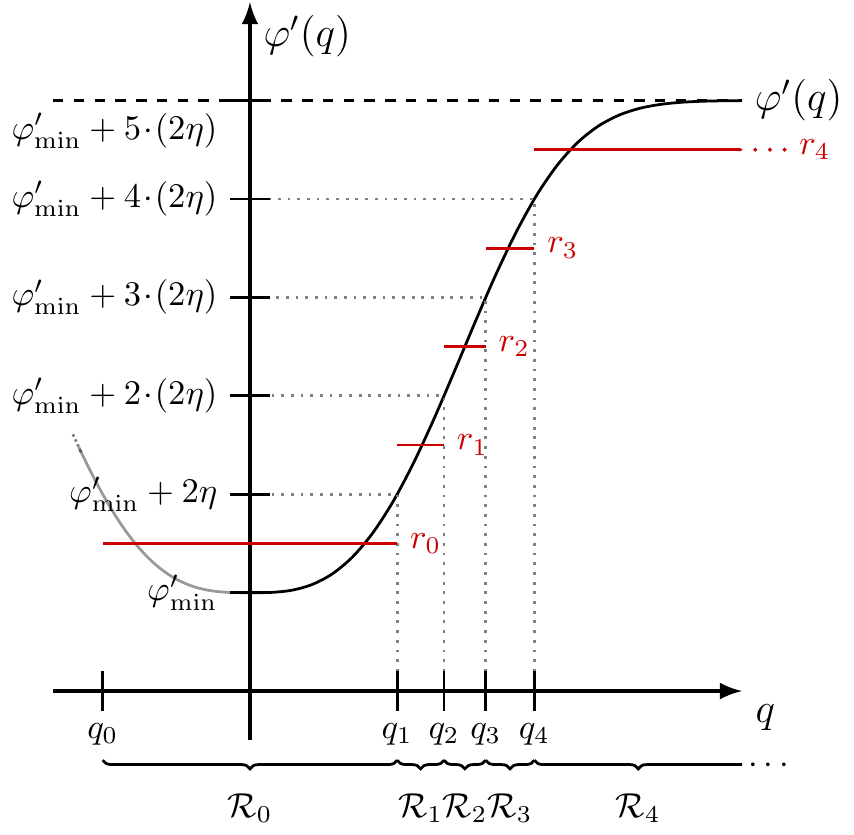}
	\caption{Partitioning strategy presented in Proposition~\ref{prop:OptPart}, based on the uniform division of the image of $\varphi'$ under $\R$.}
	\label{fig:PhiPartition}
\end{figure} 
In this case, we may explicitly compute $N$ such that the error bound is guaranteed to be inferior to $\eta_\textnormal{ref}$, as stated in the next proposition.

\begin{proposition}
\label{prop:Nbound}
	Let $\varphi$ be a nonlinearity satisfying Assumptions~\ref{ass:Varphi} and~\ref{ass:VarphiOpt}. Let $\etaref > 0$ be the desired upper bound on the Lipschitz constant of the approximation error. Then, if
	\begin{equation}
	\label{eq:NDef}
		m = \left\lceil \frac{\ell(\varphi'(\R_+))}{2\etaref}\right\rceil -1 > 0,
	\end{equation}
	with $N := 2m + 1$, and $\vpwa$ is obtained by the method in Proposition~\ref{prop:OptPart}, then the approximation error $\epsilon$ is Lipschitz with a Lipschitz constant $\eta \leq \etaref$.
\end{proposition}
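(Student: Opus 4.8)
The plan is to obtain the result directly from Proposition~\ref{prop:OptPart} together with the derivative characterisation of Lipschitz continuity used in Proposition~\ref{prop:UpBound}; the whole content is to solve $\eta\le\etaref$ for the smallest admissible number of regions and to count the regions correctly. Under Assumptions~\ref{ass:Varphi} and~\ref{ass:VarphiOpt} the function $\varphi$ is odd, so $\varphi'$ is even, and it suffices to partition $\R_+$ and extend by reflection. Partitioning $\R_+$ into $m+1$ intervals by a uniform division of $\varphi'(\R_+)$, exactly as in Proposition~\ref{prop:OptPart}, produces (optimally) an approximation whose error satisfies $\norm{\epsilon'(q)}\le\eta$ for almost every $q\in\R_+$ with
\[
	\eta=\frac{\ell(\varphi'(\R_+))}{2(m+1)}.
\]
Note that $\ell(\varphi'(\R_+))<\infty$: asymptotic linearity (Assumption~\ref{ass:Varphi}) forces $\varphi'$ to be bounded, so $m$ in~\eqref{eq:NDef} is a finite integer.

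First I would impose $\eta\le\etaref$. This is equivalent to $m+1\ge \ell(\varphi'(\R_+))/(2\etaref)$, and since $m+1$ is a positive integer the least value for which it holds is $m+1=\big\lceil \ell(\varphi'(\R_+))/(2\etaref)\big\rceil$. Using the identity $\lceil x-1\rceil=\lceil x\rceil-1$, valid for every real $x$ since $-1$ is an integer, this is precisely $m=\big\lceil \ell(\varphi'(\R_+))/(2\etaref)\big\rceil-1$, i.e.~\eqref{eq:NDef}; the hypothesis $m>0$ merely ensures that this choice gives a genuine, nontrivial partition (the degenerate cases $m\le 0$, in which a single affine piece on $\R_+$ already meets the bound, being deliberately excluded).

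Next I would count regions to obtain $N=2m+1$. Because $\varphi'$ is even, the uniform division of $\varphi'(\R_+)$ may be mirrored about the origin; the innermost interval $[0,q_1]$ and its reflection $[-q_1,0]$ coalesce into a single symmetric region $[-q_1,q_1]$, on which $\varphi'$ has the same image as on $[0,q_1]$, while the remaining $m$ intervals in $\R_+$ and their $m$ reflections in $\R_-$ contribute $2m$ further regions --- a total of $2m+1$, as claimed. Fixing $\vpwa(0)=0$ and taking $r_{-i}=r_i$ together with the $s_i$ chosen for continuity makes $\vpwa$ odd, hence $\epsilon=\varphi-\vpwa$ odd and $\epsilon'$ even; consequently the bound $\norm{\epsilon'(q)}\le\eta$ holds for almost every $q\in\R$, and by the equivalence between boundedness of the derivative and Lipschitz continuity recalled in the proof of Proposition~\ref{prop:UpBound} we get $\norm{\epsilon(q)-\epsilon(\tq)}\le\eta\norm{q-\tq}$ for all $q,\tq\in\R$ with $\eta\le\etaref$. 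The argument is essentially bookkeeping on top of Propositions~\ref{prop:OptPart} and~\ref{prop:UpBound}, so there is no genuine obstacle; the only points that require a little care are the ceiling-function arithmetic and the observation that the symmetric extension of an $(m+1)$-region partition of $\R_+$ yields $2m+1$ and not $2m+2$ regions, the central interval being shared.
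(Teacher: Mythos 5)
Your proof is correct and follows essentially the same route as the paper, which simply invokes the identity $\eta=\ell(\varphi'(\R_+))/(2(m+1))$ from Proposition~\ref{prop:OptPart} and leaves the rest implicit. You have merely filled in the details the paper omits --- the ceiling arithmetic turning $\eta\le\etaref$ into~\eqref{eq:NDef}, and the observation that the symmetric extension of the $(m+1)$-region partition of $\R_+$ shares its central interval and hence yields $N=2m+1$ --- both of which are accurate.
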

\begin{proof}
	This is a simple consequence of the fact that the partitioning strategy presented in Proposition~\ref{prop:OptPart} ensures that $\eta = \ell(\varphi'(\R_+))/(2(m+1))$. \hfill$\Box$
\end{proof}

With the techniques presented in this section and Section~\ref{se:IncStabPWALure}, we have all the tools to apply Algorithm~\ref{alg:Method} to the study of the incremental asymptotic stability of Lur'e systems~\eqref{eq:Lure}. This shall be illustrated in the next section through some numerical examples.

\section{Numerical examples}
\label{se:NumEx}

\addtolength{\textheight}{-1cm}

\begin{example}
\label{ex:ExAcad}
	Consider the nonlinear system given by~\eqref{eq:Lure} with
	\begin{equation}
	 	\begin{aligned}
	 		A &= \nmatrix{cc}{-1 & 0 \\ 3 & -2} & B &= \nmatrix{c}{1 \\ 0} & C &= \nmatrix{cc}{0 & 1}
	 	\end{aligned}
	\end{equation}
	and $\varphi(q) = 2q^3$, for $\abs{q} \leq 1$, and $\varphi(q) = 6q - 4\sign(q)$, for $\abs{q} > 1$. $\varphi$ satisfies Assumption~\ref{ass:Varphi}, and belongs to the incremental sector $[0,6]$. Analysis via the incremental circle criterion does not lead to a conclusion on the incremental stability of the system. We aim to obtain a piecewise-affine approximation $\vpwa$ over $\R$, so that we can apply Theorem~\ref{th:LurePWA_dAS}. Let us fix the desired maximal Lipschitz constant as $\etaref = 0.8$. Using the approach proposed in Section~\ref{se:AppNL}, we obtain the approximation illustrated in Fig.~\ref{fig:ExAcadPWA}, with $N = 7$ and $\eta = 0.75$. Using Proposition~\ref{prop:LurePWA}, the system is transformed in the interconnection of a PWA system and a Lipschitz nonlinearity. We then successfully apply Theorem~\ref{th:LurePWA_dAS} to construct a piecewise-affine incremental Lyapunov function, and conclude that this system is globally incrementally asymptotically stable.	
	\begin{figure}[tb]
		\begin{subfigure}[t]{0.49\linewidth}
			\centering
			\resizebox{0.8\linewidth}{!}{
			\includegraphics{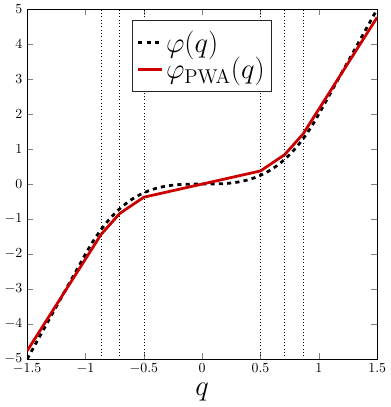}
			}
		\end{subfigure}
		\hfill
		\begin{subfigure}[t]{0.49\linewidth}
			\centering
			\resizebox{0.76\linewidth}{!}{
				\includegraphics{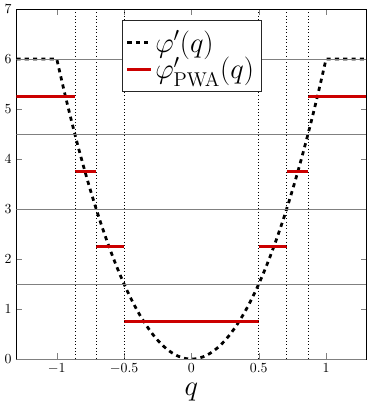}
			}
		\end{subfigure}
		\caption{Nonlinear function $\varphi(\alpha) = 2q^3$ in Example~\ref{ex:ExAcad} and the piecewise-affine approximation $\vpwa$. The dotted lines represent the partition $\{\mRi\}_{i\in\I}$.}
		\label{fig:ExAcadPWA}
	\end{figure}
\end{example}

\begin{example}
\label{ex:ExMissile}
	Let us consider the nonlinear missile benchmark presented in~\cite{Reichert1992}. The incremental behavior of the closed-loop system with a PI controller has been previously studied in~\cite{Fromion1999}. In this reference, the closed-loop system is written as an LTI system fedback through a nonlinearity $\varphi(\alpha) = -(a_n\alpha^3 + b_n\abs{\alpha}\alpha)$, with $\alpha$ being the angle of attack (see~\cite{Fromion1999} for complete model and details). This model is assumed to be valid for $\abs{\alpha}$ less than $20^\circ$ (or $0.34~\textnormal{rad}$). Using again the techniques in the previous section with $\etaref = 3.5$, we obtain the approximation $\vpwa$ presented in Fig.~\ref{fig:ExMissilePWA}, with $N = 5$ and $\eta = 2.4293$. Application of Theorem~\ref{th:LurePWA_dAS} allows us to assess the incremental asymptotic stability of the closed-loop system, which concurs with the observations on~\cite{Fromion1999} about the good behavior provided by the PI controller.
	\begin{figure}[tb]
		\centering
		\includegraphics[width=0.4\linewidth]{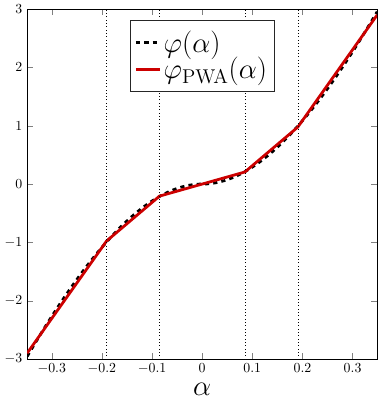}
		\caption{Nonlinear function $\varphi(\alpha) = -(a_n\alpha^3 + b_n\abs{\alpha}\alpha)$ in Example~\ref{ex:ExMissile} and the piecewise-affine approximation $\vpwa$.}
		\label{fig:ExMissilePWA}
	\end{figure}
\end{example}

\section{Conclusion}
\label{se:Conc}

In this \paper we have proposed a new method to assess incremental asymptotic stability of Lur'e systems, based on piecewise-affine approximations. As a byproduct, we extended the celebrated incremental circle criterion to the analysis of PWA Lur'e systems, with conditions that can be solved very efficiently by interior point solvers.

Perspectives for future work include the extension of the approach in Section~\ref{se:AppNL} to the case of multivariable nonlinearities, and the establishment of local results, e.g. in the case when the nonlinearity $\varphi$ is not asymptotically linear and a global approximation $\vpwa$ with a finite partition is not possible. Finally, the results in Section~\ref{se:AppNL} may be coupled with robustness analysis to ensure robust incremental stability of Lur'e systems. 

\bibliographystyle{abbrvnat}
\bibliography{Ref_arXiv2017}

\end{document}